\newcommand{\con}{\wedge} 
\newcommand{\dis}{\vee} 
\newcommand{\alw}{\Box} 
\newcommand{\imp}{\Rightarrow} 
\newcommand{\equ}{\Leftrightarrow} 
\newcommand{\som}{\Diamond} 
\title{A System for Deduction-based Formal Verification of Workflow-oriented Software Models}
\author{Rados{\l}aw Klimek}
\date{December 2013}
\institute{AGH University of Science and Technology,\\
           al.\ A.\ Mickiewicza 30, 30-059 Krakow, Poland\\
           \email{rklimek@agh.edu.pl}
           }
\newcommand\nonofoot[1]{%
   \begingroup
   \renewcommand\thefootnote{}\footnote{\kern-0.2ex#1}%
   \addtocounter{footnote}{-1}%
   \endgroup
}
\begin{document}
\framebox{
\begin{minipage}{.9\linewidth}
\begin{center}
\vspace{.5cm}
This document is a pre-print copy of the accepted article
for \emph{International Journal of Applied Mathematics and Computer Science},
2014,
vol.\ 24,
no.\ 4,
pp.\ 941--956,\\
available at\\
\texttt{https://www.amcs.uz.zgora.pl/?action=paper\&paper=802}
\vspace{.5cm}
\mbox{}\newline
The final version of the article is identifed by the following DOI:\\
\texttt{DOI:10.2478/amcs-2014-0069}
\vspace{.5cm}\\
\end{center}
\end{minipage}
}\\

\noindent\textbf{\Large BibTeX}
\vspace{-9cm}
\begin{verbatim}
@article{Klimek-2014-AMCS,
  author = {Klimek, Rados{\l}aw},
  title = {A System for Deduction-based Formal Verification of
           Workflow-oriented Software Models},
  journal = {International Journal of Applied Mathematics and
             Computer Science},
  year = {2014},
  volume = {24},
  number = {4},
  pages = {941--956},
  doi={10.2478/amcs-2014-0069},
  keywords = {formal verification, deductive reasoning,
              temporal logic, semantic tableaux,
              workflow patterns, logical primitives,
              generating logical specifications,
              business models, BPMN}
}
\end{verbatim}
\pagebreak
\pagestyle{plain}
\pagenumbering{arabic}
\setcounter{page}{1}
\maketitle

\begin{abstract}
The work concerns formal verification of workflow-oriented software models using deductive approach.
The formal correctness of a model's behaviour is considered.
Manually building logical specifications, which are considered as a set of temporal logic formulas,
seems to be the significant obstacle for an inexperienced user when applying the deductive approach.
A system, and its architecture, for the deduction-based verification of workflow-oriented models is proposed.
The process of inference is based on the semantic tableaux method
which has some advantages when compared to traditional deduction strategies.
The algorithm for an automatic generation of logical specifications is proposed.
The generation procedure is based on the predefined workflow patterns for BPMN,
which is a standard and dominant notation for the modeling of business processes.
The main idea for the approach is to consider patterns, defined in terms of temporal logic,
as a kind of (logical) primitives which enable
the transformation of models to temporal logic formulas constituting a logical specification.
Automation of the generation process is crucial for bridging the gap
between intuitiveness of the deductive reasoning and
the difficulty of its practical application in the case when
logical specifications are built manually.
This approach has gone some way towards supporting, hopefully enhancing our understanding of,
the deduction-based formal verification of workflow-oriented models.\\
\textbf{Keywords:}
formal verification, deductive reasoning, temporal logic, semantic tableaux,
workflow patterns, logical primitives, generating logical specifications,
business models, BPMN.
\end{abstract}

\section{Introduction}
\label{sec:introduction}

Software modeling enables better understanding of domain problems and developed systems
through goal-oriented abstractions in all phases of a software development.
The software models require careful verification using
mature tools to make sure that the received software products are reliable.
Formal methods are intended to systematize and introduce a rigorous approach to
software modeling and development by providing precise and unambiguous description mechanisms.
A formal approach can be applied at any phase of the software-life cycle~\cite{Woodcock-etal-2009},
i.e.\ from requirements engineering to verification/validation as well as testing~\cite{Hierons-2009}.
A key issue in formal methods and software engineering is the correctness problem.
``Program testing can be used to show the presence of bugs, but never to show their absence''~\cite[Corollary]{Dijkstra-1972}.
Formal specification and formal verification are two important and closely related parts of the formal approach.
Formal \emph{specification} establishes fundamental system properties and invariants.
Formal \emph{verification} is the act of proving correctness of the system.
The importance of the formal approach increases and there are many examples of
its successful application, e.g.~\cite{Abrial-2007}.

This work concerns logical inference used for formal verification of software models
and practical possibilities of building tools for an appropriate verification procedure.
There are two fundamental and well-established approaches to formal verification of
systems~\cite{Clarke-Wing-etal-1996}.
The first one is algorithmically oriented and based on the state exploration
and the second one is logically oriented  and based on the deductive reasoning.
Now, the state exploration approach,
i.e.\ model checking~\cite{Clarke-etal-1999},
wins on points versus the deductive approach
due to the significant progress observed during recent years
in the field of model checking.
However, model checking is a kind of simulation for all reachable paths of computation
and constitutes an operational rather than an analytic approach.
On the other hand,
deductive reasoning plays an important role in the formal approach
as a ``top-down'' and sustainable way of thinking,
with reasoning moving from a more general facts to the more specific ones
to reach a certain conclusion which is logically valid.
Let us consider some arguments in favor of the deductive approach.
\begin{itemize}
  \item The main argument is the fact that deductive reasoning enables
        the analysis of infinite sequences of computations.
  \item Another argument is naturalness and common use of deductive reasoning in everyday life.
        It also dominates in scientific works.
  \item A kind of informal argument is an analogy between natural languages and logical approach,
        i.e.\ the application and knowledge of strict and formal grammatical rules,
        although not necessary, raises the quality and culture of statements in
        a natural language, while, by analogy, there is no doubt that
        applying strict logical rules for reasoning increases the quality of verification procedures and makes them more reliable.
\end{itemize}

Obtaining logical specifications which are considered as a~set of
temporal logic formulas $\{F_{1}, ..., F_{n}\}$ are
important and crucial issue for any deductive system.
When $n$ is large, which is not a rare situation even in the case of an average-size system,
then in practice it is not possible to build a logical specification manually and
therefore there is a need to automate this process.
However, software models could be organized into some predefined workflow patterns which
constitute a kind of primitives which
enable the transformation of software models to logical specifications.
Automation of this process enables the bridging of the gap between the naturalness of
the deductive verification and the difficulty of its practical application.
The lack of automation is a significant obstacle to the practical use of
logical inference for formal verification.
The choice of a~deductive system,
which is natural and intuitive enough for inexperienced users,
is another important aspect.
Although the work is not based on a particular method of reasoning,
the semantic tableaux method for temporal logic is selected since
it is intuitive and has some advantages to compare with
other deduction strategies.

Business models are considered in this work.
The significance of business models and their workflows increases
in the context of Service Oriented Architecture (SOA),
which is a paradigm that gained important attention within
the Information Technology (IT) and business communities.
All arguments mentioned in this Section are important for
research and constitute a challenge to the deductive approach.

\subsection{Motivation and contribution}
\label{sec:motivation-contribution}

The motivation of this work is the lack of tools for the automatic generation/extraction
of logical specifications, considered as sets of temporal logic formulas,
as well as the practical use of the deduction-based formal verification
for workflow-oriented models.
Business models expressed in BPMN (Business Process Modeling Notation),
a standard and the dominant notation for business processes,
are an important class of systems and are suitable for
the discussed method of deductive reasoning about system properties
and seem to be an intellectual challenge that software engineers are faced with
when they try to obtain trustworthy and reliable models.

The aim of this work is to provide a conceptual theoretical framework
supporting the deduction-based formal verification of workflow-oriented models.
The contribution is a complete deduction-based system, including its architecture and components,
which enables automated and formal verification of business models.
The main contribution is the algorithm for the generation of logical specifications
providing the method of extracting logical specifications from workflow models.
Theoretical possibilities of such an automation and the completeness issue for this process are discussed.
The application of a non-standard method for deduction which is the semantic tableaux method for
temporal logic in the area of business models is another contribution.
The proposed approach is characterized by the following advantages:
introducing predefined patterns as primitives to logical modeling,
and logical patterns once defined,
e.g.\ by a logician or a person with good skills in logic,
then widely used,
e.g.\ by analysts and developers with less skills in logic.

This work shows theoretical solutions to some problems as outlined above,
allowing for future preparation of workable practical solutions.
It also opens new research areas as shown in the last Section.

\subsection{Related works}

Workflow technologies are always important for the scientific world,
c.f.~\cite{Barker-VanHemert-2008},
providing a kind of glue for distributed services, for example,
service-oriented architectures which constitute a number of
loosely coupled and independent services,
to obtain more flexible than traditional and strictly coupled applications.
Thus, the importance of workflow technologies increases both for scientific and business domains.
\cite{Dehnert-Aalst-2004} presents
a kind of bridge between business process modeling and workflow specification.
The proposed methodology consists of some steps which are designed to provide and include a remedy for
intuitive description and informal languages.
A proliferation of business process management modeling languages is discussed in~\cite{Ko-etal-2009}.
Languages and notation are classified into groups of
execution, interchange, graphical standards, and diagnostics
providing identification and answer for some common misunderstandings,
and also discussing future trends.
The dominant language, and \emph{de facto} standard, for business process modeling
becomes BPMN (Business Process Modeling Notation),
see remarks at the beginning of Section~\ref{sec:analysis-verification}.
A formal semantics of a subset of BPMN using the process algebra CSP formalism is proposed
in~\cite{Wong-Gibbons-2011}.
Such formalism allows comparing BPMN models prepared by developers.
A pattern-based method expressing behavioural properties is considered in the work.
A translation into a bounded fragment of linear temporal logic is also presented.
In~\cite{Dijkman-etal-2008},
a mapping from BPMN to Petri nets is proposed to
obtain analysis techniques using existing Petri net-based tools, and to enable the static analysis of BPMN models.
In~\cite{Leuxner-etal-2010},
a formal model for workflows based on a transition system is presented and some algebraic properties are discussed.
A~meta-model for formal specification of functional requirements in business process models,
which is not well covered in literature, is proposed in~\cite{Frece-Juric-2012}.
Specific extensions to the BPMN semantic and diagram elements are introduced.
YAWL~\cite{Aalst-Hofstede-2005} is a workflow language supporting complex data transformations.
It is a graphical language but has a well-defined formal semantics
defined as a transition system
providing a firm basis for the formal analysis of real-world services.

Business models are also subject to formal verification.
\cite{Dury-etal-2007} discusses business workflows for
formal verification using model checking.
\cite{Eshuis-Wieringa-2004} addresses the issues of workflows
but they are specified in UML activity diagrams and the goal is to
translate diagrams into a~format that allows model checking.
Some aspects of workflows and
temporal logic are considered in~\cite{Brambilla-etal-2005}
but the formulas are created manually and
formal verification is not discussed very widely.
However, these considerations may constitute a kind of starting point for this work.
Another important direction of research is verifying
business processes using Petri nets~\cite{Aalst-2002}.
In~\cite{Zha-etal-2011},
a translation of workflows to Petri nets is proposed to perform analysis using existing tools.
An interesting direction of the analysis is $\pi$-calculus,
that enables efficient reasoning,
e.g.\ \cite{Ma-etal-2008},
and is designed for business processes and the BPEL language.
The paper~\cite{Bryans-Wei-2010} is another work that considers
an algorithmic translation from BPMN to the Event-B notation,
which based on the abstract machine notation, for system modeling and analysis.
\cite{Morimoto-2008} presents
a survey of formal verification for business processes. It discusses
automata,
model checking,
communicating sequential processes,
Petri nets, and
Markov networks.
All these issues are discussed in the context of business process management and web services.
In the general work by Shankar~\cite{Shankar-2009} automated deduction for verification is discussed.
There are discussed some important issues for symbolic logical reasoning,
e.g.\ satisfiability procedures, automated proof search, and variety of application
in the case of propositional and fragments of first-order logic.
However, even though the work contains a review of symbolic reasoning, modal and temporal logics are omitted.
\cite{Xu-etal-2012} discusses formal verification of workflows.
A special language is developed but algorithms refer only to propositional logic.
A deductive system for workflow models is proposed in~\cite{Rasmussen-Brown-2012}.
Even though it presents a solid mathematical framework and some deductive work is done,
the theoretical background is like Petri nets and not a formal logic.
In~\cite{Duan-Ma-2005},
a method, and a management system, for specification workflows by
temporal logic based workflow specification model is proposed.
\cite{Yu-Li-2007} proposes
a workflow and a linear temporal logic model.
It enables formal verification of workflows and is oriented on model checking.
\cite{Rao-etal-2008} proposes a process model of
a workflow management system for which specification of constraints are expressed in linear temporal logic.
Another paper that focuses on the constraints specification using linear temporal logic
is~\cite{Maggi-etal-2011}.
A translation of declarative workflow languages to linear temporal logic and finite automata
are considered in~\cite{Westergaard-2011}.
The improved algorithms for such a translation process are proposed.
In an interesting paper by Taibi and Ngo~\cite{Taibi-Ngo-2003} design patterns are discussed.
A simple language for pattern specification, combining first-order and temporal logic of actions,
is proposed.

However, all of the research themes mentioned above are different from
the approach presented in this paper which focuses on formal verification of
business processes using deductive-based reasoning with temporal logic.
While formal verification is discussed in some of the papers,
the application of temporal logic for this purpose is relatively rare.
Moreover, the deductive approach used for this domain is quite rare.

\subsection{Structure}

The rest of the paper is organized as follows.
Logical preliminaries which are temporal logic and logical inference using the semantic tableaux method
are discussed in Section~\ref{sec:preliminaries}.
Temporal logic is an established standard for the specification and verification of reactive systems
and the semantic tableaux method is a natural and valuable method of inference.
The deduction system and its architecture is proposed in Section~\ref{sec:deduction-system}.
The system enables formal verification of business models.
It consists of several software components, and some of them may be treated as interchangeable.
Workflow patterns are discussed in Section~\ref{sec:primitives}.
They are treated as (logical) primitives which allow to automate
the entire process of generating logical specifications.
The Algorithm for extracting logical specifications is proposed in Section~\ref{sec:generating-specifications}.
A general example of generating logical specifications is
presented in Section~\ref{sec:analysis-verification}.
The work is summarized and further research is discussed in Section~\ref{sec:conclusions}.

\section{Logical preliminaries}
\label{sec:preliminaries}

Formal logic is a symbolic language that supports
the reasoning process with statements to be evaluated as true or false.
There is a need for a rigorous and logic-based tool that
enables formal reasoning about software models.
Natural languages that do not belong to formal logic can be expressive but
they are very imprecise and ambiguous.
On the other hand,
formal languages, and such as formal logic, are not expressive
but they are precise and program properties expressed formally are
clearly and commonly understood.

\emph{Temporal Logic} TL which is a branch of symbolic logic that focusses on
statements whose evaluations depend on time flows,
i.e.\ it is a formal language which allows expression of temporal properties.
Temporal logic is a valuable formalism,
e.g.~\cite{Venema-2001,Wolter-Wooldridge-2011},
which has strong application in the area of software engineering for
the specification and verification of software models and reactive systems.
It is used for the system analysis where behaviors of events are of interest.
TL exists in many varieties, however, considerations in this paper are
limited to the \emph{Linear Temporal Logic} LTL,
i.e.\ logic for which the time structure is considered as a linear.
It means that each state has exactly one future.

The syntax of LTL logic is formulated over a countable set of
\emph{atomic formulas} $AP=\{ p, q, r, ...\}$ and
the set of \emph{temporal operators} ${\cal M}=\{ \som, \alw \}$.
Atomic formulas mean formulas with no propositional sub-structure,
or formulas with no sub-formulas,
or variables from propositional calculus.
Syntax rules allow the definition of syntactically correct,
or well-formed, temporal logic formulas.
\begin{definition}
\label{def:LTL-syntax}
A \emph{LTL formula} is a formula which is built using the following rules:
\begin{itemize}
\item if $p \in AP$ then $p$ is a LTL formula,
\item if $p$ and $q$ are formulas, then $\neg p$, $p \dis q$,
      $p \con q$, $p \imp q$, $p \equ q$ are LTL formulas
\item if $p$ is a formula, then $\boxcoasterisk\, p$, where $\boxcoasterisk \in {\cal M}$, is also a LTL formula.
\end{itemize}
\end{definition}
Thus, the whole \emph{LTL alphabet} consists of the following symbols:
$AP$, ${\cal M}$ and classical logic symbols like $\neg$, $\dis$, $\con$, etc.
It is relatively easy to introduce other symbols, e.g.\ parenthesis,
which are omitted here to simplify the presentation.
The ${\cal M}$ set consists two fundamental and unary temporal logic operators,
where $\som$ means ``sometime (or eventually) in the future''
and $\alw$ means ``always in the future''.
The operators are dual,
i.e.\ $\neg\som$ is, informally, equal to $\alw\neg$,
and $\som$ to $\neg\alw\neg$ and $\alw$ to $\neg\som\neg$.
The ${\cal M}$ set can be extended to other temporal logic operators.
Considerations in the work are focused on the LTL logic,
and particulary on \emph{Propositional Linear Temporal Logic} PLTL.
Propositions are statements that could affirm something about
members of a class, i.e.\ workflow activities considered in the work.
Thus, propositions $AP$ are used as atomic formulas in
Definitions~\ref{def:LTL-syntax} and~\ref{def:workflow-set},
as well as
atomic formulas in the predefined $P$ set in Fig.~\ref{fig:predefined-P} and~\ref{fig:predefined-P-Arbitrary-Cycles}.
However, notions introduced in these Figures are described in Section~\ref{sec:analysis-verification}.

The semantics of the LTL logic is traditionally defined using
the concept of \emph{Kripke structure}
which is considered as a graph, or path, whose nodes represent the reachable states
$w=s_{0},s_{1},s_{2},...$,
or in other words the reachable worlds,
and a labeling function which maps each node to
a set of atomic formulas $2^{AP}$ that are satisfied in a state.
A \emph{valuation} function $\nu(w(i)) \longrightarrow 2^{AP}$, where $i\geq 0$,
and $w(i)$ means the $i$-th element of the path $w$,
allows to define the \emph{satisfaction} $\models$ relation between
a path and a LTL formula, e.g.\
$w\models p$ iff $p\in w(0)$,
$w\models \neg p$ iff it is not $p\in w(0)$ and
$w\models \som p$ iff $p\in w(i)$, where $i\geq 0$, etc.
Theorems and laws of the LTL logic can be found in~\cite{Emerson-1990}.

Deductive reasoning is a kind of ``top-down'' way of thinking
that links premises and conclusions.
This is a typical and natural procedure in everyday life.
Logic and reasoning are cognitive skills.
Logical reasoning is the process of applying
sound mathematical procedures to given statements to arrive at conclusions.
Formal and logic-based inference enables reliable verification of desired properties.
There are some techniques, or proof procedures, which are systematic
methods producing proofs in some calculus, or provable, statements.
In other words, they are decision procedures for logic which
enables determining formula satisfiability.
There are some examples of deductive reasoning:
sequent calculi, resolution-based techniques or semantic tableaux.
The resolution technique is based on the observation that every logical formula can be transformed into a conjunctive normal form.
The interesting feature of the resolution method is that it has only one inference rule,
the resolution rule.
On the other hand,
the method can be employed to formulas (sub-formulas) in conjunctive normal form.
The essence of the procedure is to prove the validity of a sub-formula
by establishing that the negation of this sub-formula is unsatisfiable.
Another proof procedure is the semantic tableaux method which is based on the observation that
it is not possible for an argument to be true while the conclusion is false.
The essence of the procedure is finding counterexamples in branches of a tree after breaking down formulas.
Semantic tableaux are global, goal-oriented and ``backward'', while resolution is local and ``forward''.

Although the work is not based on any particular method of reasoning,
the method of semantic tableaux is presented in a more detailed way.
The method of \emph{semantic tableaux}, or \emph{truth tree}
is well known in classical logic but it can be applied
in modal logic~\cite{Agostino-etal-1999}.
\begin{figure}[htb]
\centering
{\small
\pstree[levelsep=4.5ex,nodesep=2pt,treesep=25pt]
       {\TR{$1: \neg(\alw(p \imp q) \con \alw(q \imp \som r) \con \alw(r \imp s) \imp \alw(p \imp \som s))$}}{
          \pstree{\TR{$1: \alw(p \imp q) \con \alw(q \imp \som r) \con \alw(r \imp s) \con \som(p \con \alw\neg s)$}}{
          \pstree{\TR{$1: \som(p \con \alw\neg s)$}}{
          \pstree{\TR{$1: \alw(p \imp q)$}}{
          \pstree{\TR{$1: \alw(q \imp \som r)$}}{
          \pstree{\TR{$1: \alw(r \imp s)$}}{
          \pstree{\TR{$1.[a]: p \con \alw\neg s$}}{
          \pstree{\TR{$1: p$}}{
          \pstree{\TR{$1.[x]: \neg s$}}{
          \pstree{\TR{$1.[y]: p \imp q$}}{
            \pstree{\TR{$1: \neg p$}}{$\times$}
            \pstree{\TR{$1: q$}}{
              \pstree{\TR{$1.[z]: q \imp \som r$}}{
                \pstree{\TR{$1: \neg q$}}{$\times$}
                \pstree{\TR{$1.[b]: r$}}{
                \pstree{\TR{$1.[w]: r \imp s$}}{
                  \pstree{\TR{$1: \neg r$}}{$\times$}
                  \pstree{\TR{$1: s$}}{$\times$}
                }
                }
            }}
}}}}}}}}}}
}
\caption{The truth tree of the semantic tableaux method}
\label{fig:deduction-tree}
\end{figure}
It is a decision procedure for a formula satisfiability checking
and represents reasoning by contradiction,
i.e.\ \emph{reductio ad absurdum}.
The method is based on the formula decomposition
using predefined decomposition rules.
At each step of the well-defined procedure,
formulas become simpler as logical connectives are removed.
The tree is \emph{finished} if every (sub-)formula is decomposed and
every leaf contains an atomic formula or the negation of an atomic formula.
At the end of the decomposition procedure,
all branches of the received tree are searched for contradictions.
When the branch of the truth tree contains a contradiction,
it means that the branch is \emph{closed}.
When the branch of the truth tree does not contain a contradiction,
it means that the branch is \emph{open}.
When all branches are closed, it means that the tree is closed.
In the classical approach, starting from axioms,
longer and more complicated formulas are generated and derived.
Formulas are getting longer and longer with every step,
and only one of them will lead to the verified formula.
The method of semantic tableaux is characterized by the reverse strategy.
Though we start with a~long and complex formula,
it becomes less complex and shorter with every step of the decomposition procedure.
The open branches of the semantic tree provide
information about the source of an error, if one is found,
which is an advantage of this method.
\begin{example}{}
\label{example-semantic-tableaux}
The simple example of an inference tree for a temporal logic formula is shown in
Fig.~\ref{fig:deduction-tree}.
The formula of \emph{minimal temporal logic}~\cite{Chellas-1980,vanBenthem-1995} is considered.
The adopted decomposition procedure,
as well as labeling, refers to the first-order predicate calculus
and can be found in~\cite{Hahnle-1998}.
Each node contains a (sub-)formula which is either already decomposed,
or will be subjected to decomposition in the process of building a tree.
Each formula is preceded by a label referring to the current
world reference.
Label ``$1:$'' represents initial world in which a formula is true.
Label ``$1.(x)$'', where $x$ is a free variable,
represents all possible worlds that are consequent of the world~$1$.
On the other hand, label ``$1.[p]$'',
where $p$ is an atomic formula,
represents one of the possible worlds,
i.e.\ a successor of the world $1$,
where formula $p$ is true.
Let us note that all branches of the analyzed trees are closed ($\times$).
It means, there is no valuation that satisfies the root formula.
This consequently means that the formula before the negation,
i.e.\ $\alw(p \imp q) \con \alw(q \imp \som r) \con \alw(r \imp s) \imp \alw(p \imp \som s)$,
is always satisfied,
i.e.\ the formula is valid.
\end{example}

The semantic tableaux method can be treated as a \emph{decision procedure},
i.e.\ the algorithm that can produce the Yes-No answer as a response to some important questions.
Let $F$ be an examined formula and ${\cal T}$ is a truth tree built for a formula.
Then the following conclusions can be drawn.
\begin{corollary}{}
\label{th:decision-procedures}
The semantic tableaux method gives answers to the following questions related to the satisfiability problem:
\begin{itemize}
\item formula $F$ is not satisfied iff the finished ${\cal T}(F)$ is closed;
\item formula $F$ is satisfiable iff the finished ${\cal T}(F)$ is open;
\item formula $F$ is always  valid iff finished ${\cal T}(\neg F)$ is closed.
\end{itemize}
\end{corollary}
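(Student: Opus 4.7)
The plan is to derive all three items from two basic properties of the tableaux decomposition procedure: \emph{local soundness} (each decomposition rule preserves satisfiability between the parent node and the disjunction of its children) and \emph{branch completeness} (a maximally decomposed open branch yields an actual Kripke model). Once these are in hand, the three bullets follow uniformly.

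First I would inspect each decomposition rule used in the procedure and verify local soundness. For the Boolean connectives this is just the semantics of $\neg$, $\dis$, $\con$, $\imp$, $\equ$; for the temporal operators one uses that $\alw F$ holding at world $w$ means $F$ holds at every successor, so the rule distributing $\alw F$ to every world-label on the current branch preserves truth, while $\som F$ holding at $w$ means there exists a successor labelled, say, $[F]$, at which $F$ is true, justifying the introduction of the witness label. I would appeal to the detailed presentation of these rules in \cite{Agostino-etal-1999,Hahnle-1998} rather than re-derive them here.

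Next, for item (1) I would argue both directions. If the finished ${\cal T}(F)$ is closed, then every branch contains some labelled formula together with its negation at the same world; by local soundness propagated upwards along the tree, no valuation $\nu$ of any Kripke structure can simultaneously satisfy the root, so $F$ is unsatisfiable. Conversely, if ${\cal T}(F)$ is finished and contains at least one open branch $B$, I would construct a Kripke structure whose worlds are exactly the labels appearing on $B$, whose accessibility relation is induced by the label-extension relation, and whose valuation $\nu(w(i))$ is the set of atomic propositions occurring unnegated on $B$ at that label. A routine induction on formula structure (Boolean cases by branch saturation, $\alw$ and $\som$ cases by the fact that a finished branch has applied the temporal rules exhaustively) shows that every formula on $B$ is satisfied at its world, so in particular $F$ is satisfied at the root, i.e.\ $F$ is satisfiable. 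Item (2) is just the contrapositive of item (1), and item (3) is obtained by applying item (1) to the formula $\neg F$ together with the tautology that $F$ is valid iff $\neg F$ is unsatisfiable.

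The main obstacle will be in the model-construction half for the temporal operators, because naively a $\som$-formula may force the introduction of new labels indefinitely, and one must argue that the notion of a ``finished'' tree is well-defined for LTL and that the model read off from an open branch genuinely satisfies all propagated $\alw$-formulas at every introduced successor world. I would sidestep a full reproof of LTL tableau completeness by citing the established treatment in~\cite{Agostino-etal-1999} and noting that the minimal-temporal-logic decomposition used in Example~\ref{example-semantic-tableaux} falls within that framework; the corollary is then a direct combination of the soundness and completeness theorems for the method with the standard validity/satisfiability duality.
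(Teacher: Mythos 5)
Your proposal is correct and follows essentially the same route as the paper's own (much terser) proof: both reduce the three bullets to soundness and completeness of the tableau construction --- a closed tree witnesses unsatisfiability, an open saturated branch yields a satisfying model, and the validity claim follows by applying this to $\neg F$ via the standard validity/unsatisfiability duality. Your version is simply a more careful elaboration, and rightly flags the fulfillment-of-eventualities subtlety for $\som$-formulas that the paper's one-paragraph argument passes over by implicitly assuming the tree is ``finished''.
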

\begin{proof}
The semantic tableaux method is based on the systematic search for models that satisfy a formula.
To show that a formula is unsatisfiable, it needs to
show that all branches are closed.
Hence, if the tree is closed, it means there is no model that satisfies a formula.
To show that a formula is satisfiable, it needs to find one open branch.
If the tree is open, it means there exists a model that satisfies a formula.
If the tree for the negation of a formula is closed,
it means there is no model that satisfies a formula,
and as a result of the fact that this is a proving by contradiction,
it leads to the conclusion that the initial formula is always valid.
\end{proof}

\section{Deduction system}
\label{sec:deduction-system}

\begin{figure*}[htb]
\centering
\begin{pspicture}(11.5,6) 
\psset{framearc=0}
\psset{shadow=true,shadowcolor=gray}
\psset{linecolor=gray}
\rput(1.0,4.6){\rnode{r1}{\psframebox
                      {\begin{tabular}{c}
                            \textcolor{gray}{\textsc{Software}}\\
                            \textcolor{gray}{\textsc{Modeler}}
                      \end{tabular}}}}
\rput(4.9,4.6){\rnode{r2}{\psframebox
                      {\begin{tabular}{c}
                            \textcolor{gray}{\textsc{TL Formulas}}\\
                            \textcolor{gray}{\textsc{Generator} \psframebox[boxsep=true,shadow=false]{\scriptsize G}}
                      \end{tabular}}}}
\rput(4.9,2.9){\rnode{r5}{\psframebox
                      {\begin{tabular}{c}
                            \textcolor{gray}{\textsc{System's} \psframebox[boxsep=true,shadow=false]{\scriptsize S}}\\
                            \textcolor{gray}{\textsc{specification}}
                      \end{tabular}}}}
\rput(4.9,1){\rnode{r6}{\psframebox
                      {\begin{tabular}{c}
                            \textcolor{gray}{\textsc{System's} \psframebox[boxsep=true,shadow=false]{\scriptsize R}}\\
                            \textcolor{gray}{\textsc{properties}}
                      \end{tabular}}}}
\rput(1.0,1){\rnode{r7}{\psframebox
                      {\begin{tabular}{c}
                            \textcolor{gray}{\textsc{TL Query}}\\
                            \textcolor{gray}{\textsc{Editor}}
                      \end{tabular}}}}
\rput(9.0,2.0){\psset{linecolor=gray,shadow=false,linewidth=2pt}
               \rnode{r8}{\psframebox
                      {\begin{tabular}{c}
                            \textcolor{gray}{\textsc{Temporal}}\\
                            \textcolor{gray}{\textsc{Prover} \psframebox[linewidth=1pt,boxsep=true,shadow=false]{\scriptsize T}}
                      \end{tabular}}}}
\rput(2.8,3.3){\rnode{r2p}{\psframebox[boxsep=true,shadow=false]{\large $P$}}}
\rput(8.5,4.7){\rnode{r9}{$p_{1}\con \ldots \con p_{n} \imp Q$}}
\rput(11.5,2){\rnode{r10}{Y/N}}
\rput(11.3,.2){\rnode{r10aux}{{\small Aux}}}

\psset{linewidth=1.5pt}
\psset{shadow=false}
\psset{linecolor=black}
\ncline[angleA=90,angleB=0]{->}{r1}{r2}
\ncline[nodesep=.08cm]{->}{r2p}{r2}
\ncline[angleA=0,angleB=180]{->}{r2}{r5}
\ncline[angleA=270,angleB=90]{->}{r7}{r6}
\nccurve[angleA=0,angleB=162]{->}{r5}{r8}
\nccurve[angleA=0,angleB=198]{->}{r6}{r8}
\ncline[linestyle=dotted,linewidth=1pt]{->}{r9}{r8}
\ncline[angleA=90,angleB=0]{->}{r8}{r10}
\ncline[angleA=290,angleB=180,linewidth=1.0pt]{->}{r8}{r10aux}
\newcommand{\UMLkomponent}
           {\psline{-}(0,0)(3.6,0) \psline{-}(3.6,0)(3.6,1.6) \psline{-}(3.6,1.6)(0,1.6)
            \psline{-}(0,1.6)(0,1.2)
            \psline{-}(0,1.2)(-.4,1.2) \psline{-}(-.4,1.2)(-.4,.9)
            \psline{-}(-.4,.9)(.4,.9) \psline{-}(.4,.9)(.4,1.2) \psline{-}(.4,1.2)(0,1.2)
            \psline{-}(0,.9)(0,.7)
            \psline{-}(0,.7)(-.4,.7) \psline{-}(-.4,.7)(-.4,.4)
            \psline{-}(-.4,.4)(.4,.4) \psline{-}(.4,.4)(.4,.7) \psline{-}(.4,.7)(0,.7)
            \psline{-}(0,0)(0,.4)
           }
\psset{linewidth=2pt,linecolor=black}
\rput(3.0,3.8){\UMLkomponent}
\rput(7.0,1.2){\UMLkomponent}

\rput(2.3,0.0){\psframe[linewidth=2pt,framearc=.2,linecolor=gray](8.5,5.7)
\rput(1.0,0.2){\psframe[linewidth=2pt,framearc=.2,linecolor=gray,linestyle=dashed](3.25,3.40)}
\rput(2.65,1.95){\psframebox[fillstyle=solid,fillcolor=gray,linecolor=gray,shadow=false]{\textcolor{white}{\textbf{\textsc{Repository}}}}}
\rput(6.2,5.4){\psframebox[fillstyle=solid,fillcolor=gray,linecolor=gray,shadow=false]{\textcolor{white}{\textbf{\textsc{\quad System\quad}}}}}}
\end{pspicture}
\caption{An architecture of the deduction-based verification system}
\label{fig:deduction-system}
\end{figure*}
The architecture of the proposed inference system is presented and discussed below.
The system consist of some independent components and is shown
in Fig.~\ref{fig:deduction-system}.
The simpler version of the system is shown in~\cite{Klimek-2012-icaart}.
The system has two inputs.
The data stream with software models to be analyzed is the first input.
The approach is based on organizing models into
predefined patterns whose temporal properties are once defined,
e.g.\ by a person with good skills in logic,
then widely used,
e.g.\ by analysts with less skills in logic.
The second input is the analyzed property/properties expressed in terms of temporal logic formulas.
The easiest way to introduce such formulas is to use a plain text editor and to build them manually.
Such formula, or formulas, are identified by an analyst and
describe the expected/desired properties for the investigated software model.
Although specifying properties still requires knowledge of temporal logic,
but on the other hand formulas for properties are usually much easier to formulate.
The output of the whole deductive system is the ``Yes/No'' answer
in response to a new verified property.
The whole system can be synthesized informally as
$System(Model,Property) \longrightarrow Y/N$.
Such a process of inference can be performed many times in response
to any new formulas describing the desired and analyzed property.
There is another output that is called ``Aux''.
This is a point which enables outputting the auxiliary information
depending on the particular method of inference,
e.g.\ open branches in the case of the semantic tableaux method.

The proposed system is based on deductive reasoning and
enables examining whether a formula logically ``follows'' from some statements (formulas).
\begin{definition}
Let ${\cal U}$ is a set of formulas and $G$ is a formula.
If for every model of ${\cal U}$,
the formula $G$ is satisfied, i.e.\ the logical value of the formula is equal to the truth,
then $G$ is a \emph{logical consequence}, i.e.\ ${\cal U} \models G$.
\end{definition}
\begin{theorem}{Deduction Theorem}
Let ${\cal U} = \{F_{1},F_{2},...,F_{n}\}$.
${\cal U} \models G$ \quad iff \quad
$\models F \imp G$,
where $F_{1} \con ... \con F_{n} \equiv F$.
\end{theorem}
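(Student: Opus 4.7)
The plan is to prove the biconditional by splitting it into its two implications and using nothing more than the semantic definitions of $\models$, logical consequence, validity, and the truth-clauses of $\con$ and $\imp$. The pivotal observation I will lean on is that, for any Kripke model (path) $w$, we have $w \models F_1 \con \ldots \con F_n$ if and only if $w \models F_i$ for every $i \in \{1,\ldots,n\}$; that is, satisfaction of a finite set of formulas coincides with satisfaction of their conjunction. This follows directly and by induction on $n$ from the pointwise truth-clause for $\con$ at the initial state $w(0)$.

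For the direction ${\cal U} \models G \imp {} \models F \imp G$, I will fix an arbitrary model $w$ and perform a two-case analysis on $F$. If $w \not\models F$, then $w \models F \imp G$ by vacuity of the truth-clause for $\imp$. If $w \models F$, then by the observation above, $w$ satisfies each $F_i$, so $w$ is a model of ${\cal U}$; the hypothesis ${\cal U} \models G$ then forces $w \models G$, and hence $w \models F \imp G$. Since $w$ was arbitrary, $\models F \imp G$.

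For the reverse direction $\models F \imp G \imp {} {\cal U} \models G$, I will take any model $w$ of ${\cal U}$. By definition of being a model of a set, $w \models F_i$ for every $i$, so $w \models F$ by the same observation. Validity of $F \imp G$ means $w \models F \imp G$ for this particular $w$, whence $w \models G$. Arbitrariness of $w$ yields ${\cal U} \models G$.

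I expect no substantive obstacle: the result is essentially a propositional tautology lifted to the semantic level, and the argument is identical to the classical Deduction Theorem for propositional calculus. The only point requiring care is that here ``model'' refers to an LTL path $w$ with its valuation $\nu$, not a row of a truth table; however, because $\con$ and $\imp$ are interpreted at $w(0)$ by the standard Boolean clauses, the temporal operators never enter the argument and the proof goes through unchanged. The one auxiliary lemma worth stating explicitly, if the presentation calls for it, is the inductive equivalence $w \models F_1 \con \ldots \con F_n$ iff $\bigwedge_{i=1}^{n} w \models F_i$, which underpins both directions.
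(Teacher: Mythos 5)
Your proof is correct, but it is not the route the paper takes: the paper gives no in-text proof at all, remarking only that the statement is the well-known equivalence of logical consequence and logical implication and citing Kleene for it. Kleene's classical Deduction Theorem is a proof-theoretic statement about derivability and is established by induction on the length of a derivation; your argument, by contrast, is purely model-theoretic and proves exactly the semantic statement as formulated here (with $\models$ on both sides), so it is in fact the more direct match for the theorem as stated and needs no detour through a proof system or completeness. Two points in your write-up deserve emphasis: (i) the reduction of satisfaction of the finite set ${\cal U}$ to satisfaction of the single conjunction $F_{1} \con \ldots \con F_{n}$ via the Boolean clause at $w(0)$, and (ii) the observation that the temporal operators never enter the argument. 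The latter is more important than it may look: in temporal logic the Deduction Theorem in this form holds only for the \emph{local} notion of consequence, where premises and conclusion are evaluated at the initial state of the same path, which is the reading supported by the paper's definition of logical consequence; under the \emph{global} reading (premises assumed at every state) the biconditional fails, e.g.\ $p$ globally entails $\alw p$ although $p \imp \alw p$ is not valid. Since you evaluate everything at $w(0)$, your proof is sound for the paper's definition, but it would strengthen the presentation to state the local-consequence assumption explicitly rather than leave it implicit.
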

This is a well-known statement about equivalence of logical consequence and logical implication.
The proof could be found in~\cite{Kleene-1952}.
Summing up,
the examined formula $G$ is a logical consequence of $F$ iff statement $F \imp G$ is a \emph{tautology},
i.e.\ a statement that is always true.
It provides the important relation between the notions of logical consequence and validity.
The conjunction of all premises leads to the conclusion of the examined formula validity.

The system works automatically and consists of some important elements.
Some of them can be treated as a software components/plugins,
i.e.\ they are designed to work as a part of a larger system
introducing a specific feature,
and can be exchanged for one another with similar features if necessary.
The first component \psframebox[boxsep=true,shadow=false]{\scriptsize G}
generates logical specifications,
i.e.\ it performs mapping from software models to logical specifications.
This process depends also on the predefined workflow property set~$P$
which describes the temporal properties for every workflow
and is discussed in the next sections and shown in Fig.~\ref{fig:predefined-P} and~\ref{fig:predefined-P-Arbitrary-Cycles}.
A logical specification is a~set of a (usually) large number of temporal logic formulas
and is defined in Section~\ref{sec:generating-specifications}.
The generation of formulas is performed automatically by extracting logical specifications from
workflow patterns contained in a workflow model.
Formulas considered as a logical specification are collected
in the \psframebox[boxsep=true,shadow=false]{\scriptsize S} module
(data warehouse, i.e.\ file or database) that stores the specification of a system.
It can be treated as a~conjunction of formulas $p_{1}\con \ldots \con p_{n} = S$,
where $p_{i}$ is a specification formula generated during the extraction process.
The \psframebox[boxsep=true,shadow=false]{\scriptsize R} module provides
the desired and examined properties of the system,
as described above, which are expressed in temporal logic.
Both the specification of a system and the examined properties constitute
an input to the \psframebox[boxsep=true,shadow=false]{\scriptsize T} component,
i.e.\ \emph{Temporal (Logic) Prover},
which enables the automated reasoning in temporal logic.
The input for this component is usually formed in the form of the formula $S \imp Q$,
or, more precisely:
\begin{eqnarray}
p_{1}\con \ldots \con p_{n} \imp Q \label{initial-formula}
\end{eqnarray}
Due to the fact that the semantic tableaux method is an indirect proof,
then after the negation of Formula~\ref{initial-formula},
it is placed at the root of the inference tree
and decomposed using well-defined rules of the semantic tableaux method.
If the inference tree is closed, this means that
the initial Formula~\ref{initial-formula} is true.
The output of the \psframebox[boxsep=true,shadow=false]{\scriptsize T} component,
and therefore also the output of the whole deductive system,
is the answer Yes/No in response to any new verified property.

The whole verification procedure can be summarized as follows:
\begin{enumerate}
\item automatic generation of system specifications
      (the \psframebox[shadow=false,boxsep=true]{\scriptsize G} component),
      and then stored in the \psframebox[boxsep=true,shadow=false]{\scriptsize S} module;
\item introduction an examined property of a model
      (the \psframebox[boxsep=true,shadow=false]{\scriptsize R} module)
      as a temporal logic formula (formulas);
\item the automatic inference using semantic tableaux
      (the \psframebox[boxsep=true,shadow=false]{\scriptsize T} component)
      for the whole complex Formula~\ref{initial-formula}.
\end{enumerate}
Steps from~1 to~3, in whole or chosen, may be processed many times,
whenever the specification of the model is changed (step~1) or
there is a need for a new inference due to
the revised system's specification (steps~2 or~3).

\section{Workflows as primitives}
\label{sec:primitives}

Workflows considered as primitives are discussed in this Section.
\emph{Primitives} are primary or basic units not developed from anything else.
In the case of workflows,
they can be recognized as a low-level objects that lead to higher-level constructions.
In the case of logic,
they can be recognized as a not derived logical elements that lead to more complex logical specifications.
A combination of these two primitives is presented below.

Workflows play an important role in computer science and software engineering.
Broadly speaking,
the \emph{workflow} is a series of tasks, or procedural steps, or activities,
requiring an input and producing an output,
i.e.\ some added value to the whole activity.
In other words, the workflow enables observable progress of the work done by
a person, computer system, or company.
There are many examples of workflows and their notations
that influence computer science,
and one of them is business models, discussed in Section~\ref{sec:analysis-verification},
or activity diagrams of the UML language~\cite{Booch-Rumbaugh-Jacobson-1999,Pender-2003}.
The important feature of workflows is the fact that
they are focused on processes rather than documents.
This feature is especially important for the approach presented in this paper.
One can say that flow of processes is not disturbed by any data.
This gives hope to automate the process of generating logical specifications from
workflow-oriented software models which are organized in predefined structures.
The main idea is to associate workflows with temporal logic formulas that describe the dynamic aspects of workflows.
On the other hand, modeling should be limited to a set of predefined workflows and
then models can be developed using only these workflow patterns
as discussed in Section~\ref{sec:analysis-verification}.

If the last rule of Definition~\ref{def:LTL-syntax} is removed,
then the definition of a classical logic formula is received.
These formulas do not contain modal operators ${\cal M}$.
Let us present it more formally.
\begin{definition}
\label{def:CL-syntax}
The \emph{classical logic\emph{, or} point\emph{,} formula}
is a formula which is built using the following rules:
\begin{itemize}
\item if $p \in AP$ then $p$ is a point formula,
\item if $p$ and $q$ are formulas, then $\neg p$, $p \dis q$,
      $p \con q$ are point formulas.
\end{itemize}
\end{definition}
Point formulas allow to describe (logical) circumstances without considering a time flow,
i.e.\ in a point.
Only when they are preceded by a temporal operator
(e.g.\ Algorithm~\ref{alg:generating-specification} or
proof in Theorem~\ref{th:wrk-properties}),
then they are considered in the time context.

Every workflow is linked to logical formulas,
both temporal and classical ones.
Temporal logic formulas enable describing the internal properties of workflows.
Classical logic formulas enable describing workflows from the outside.
These aspects are discussed in more detail below.
\begin{definition}
\label{def:workflow-set}
The \emph{workflow set} of formulas what is denoted $wrf(a_{1}, \ldots, a_{n})$, or simply $wrf()$,
over atomic formulas $a_{1}, \ldots, a_{n}$,
is a set of formulas $f_{en}, f_{ex}, f_{1}, ..., f_{m}$
such that all formulas are syntactically correct,
and $f_{en}$ and $f_{ex}$ are point formulas,
and $f_{1}, ..., f_{m}$ are temporal logic formulas,
i.e.\ $wrf()=\{ f_{en}, f_{ex}, f_{1}, \ldots, f_{m} \}$.
\end{definition}
Formulas $a_{1}, \ldots, a_{n}$ are arguments of a workflow
constituting, informally speaking, its input,
i.e.\ these atomic formulas are used to built both point and temporal formulas of a workflow.
Workflow sets are formed in such a way that
the first two formulas are classical logic ones and further formulas are LTL ones.
The interpretation of such an organization is the following:
\begin{enumerate}
  \item classical logic formulas (Def.~\ref{def:CL-syntax})
        describe (logical) entry or exit points called
        \emph{entry formula} $f_{en}$ or \emph{exit formula} $f_{ex}$ of a workflow,
        i.e.\ they enable representation of a workflow considered as a whole,
        in other words, describing the logical circumstances of, respectively,
        the start and the termination of the whole workflow execution,
        or, they show which activities of a workflow are executed first or last,
        respectively,
        c.f.\ the predefined workflow property set $P$ given in Section~\ref{sec:analysis-verification}.
        Thus, these formulas should not be confused with the well-known
        precondition or postcondition, respectively.
        Let $wrf().f_{en}$ and $wrf().f_{ex}$ are
        entry and exit formulas, respectively, from a workflow set $wrf()$,
        if it does not lead to ambiguity,
        then formulas are written shortly $f_{en}$ and $f_{ex}$;
  \item temporal logic formulas (Def.~\ref{def:LTL-syntax})
        describe the internal behavior of the workflow $f_{1}, ..., f_{m}$,
        showing dynamic aspects of a workflow pattern.
        Every property can be characterized using
        a liveness property and a safety property,
        c.f.~\cite{Alpern-Schneider-1985},
        thus, the aim is to obtain a decomposition
        expressed in terms of temporal logic formulas.
\end{enumerate}
Summing up,
point formulas allow consideration of a workflow as a whole, i.e.\ from the outside point of view,
while temporal formulas show the internal behavior of a workflow.

Some restrictions on atomic formulas $a_{1}, \ldots, a_{n}$ of
the workflow set $wrf()$ in Definition~\ref{def:workflow-set},
due to the partial order, are introduced.
\begin{definition}
\label{def:partial-order}
The set of atomic formulas is divided into three subsets which are pairwise disjointed
and the following rules must be valid:
\begin{enumerate}
\item the first subset which contains at least one element consists of \emph{entry arguments},
      and all of these arguments, and no others, form the $f_{en}$ formula,
\item the second subset which may be empty consists of \emph{ordinary arguments},
\item the third subset which contains at least one element consists of \emph{exit arguments},
      and all these arguments, and no others, form the $f_{ex}$ formula.
\end{enumerate}
\end{definition}

\begin{example}{}
\label{example-workflow-patterns}
Let us discuss some examples of workflow sets for hypothetical workflow patterns:
$W1(a,b)=\{a, b, a \imp\som b, \alw\neg (a \con b)\}$,
$W2(a,b,c)=\{a, b \dis c, a \imp \som b \con \som c, \alw\neg (a \con (b \dis c)) \}$, and
$W3(a,b,c,d)=\{a \dis b, d, a \imp\som c, b \imp\som c, \alw\som c \imp \alw\som d, \alw\neg ((a \dis b) \con (c \dis d)) \}$.
In the case of $W1$ and $W2$ the $a$ proposition is a (logical) starting point for
the whole workflow, i.e.\ it means that when $a$ is satisfied then workflows is started.
$W1$ probably refers to a workflow for a sequence of two tasks $a \imp\som b$ (liveness) and
therefore it is also not possible (safety) that these two task are satisfied
simultaneously $\alw\neg (a \con b)$.
$W2$ probably shows a parallel split of two task,
and therefore the $b \dis c$ formula describes that
when the workflow ends then $b$ or $c$ are satisfied.
In the case of $W3$, the disjunction $a \dis b$ is a (logical) starting point.
The $d$ task is always the last activity of the workflow.
The set of formulas for $W3$ is a more complex and interesting case.
It describes a reactive and fair service (liveness) $\alw\som c \imp \alw\som d$,
i.e.\ when $c$ is satisfied then always follow $d$.
The service is ready to work after a initiation of the whole workflow ($a \dis b$),
and after starting a service (liveness) $a \imp\som c$ or $b \imp\som c$.
It is mandatory to ensure safety of the workflow,
i.e.\ the start formulas and service formulas cannot be satisfied at the same time
$\alw\neg ((a \dis b) \con (c \dis d))$.
\end{example}
\begin{corollary}{}
\label{th:wrk-properties}
The Definition of the workflow set $wrf()$, and further remarks, lead to the following valid statements:
\begin{itemize}
\item none of the ordinary arguments of a workflow set are included either in the $f_{en}$  or the $f_{ex}$ formula;
\item every workflow contains, and its logical formulas describe, the structure that consists of at least two activities (or tasks).
\end{itemize}
\end{corollary}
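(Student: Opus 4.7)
The plan is to derive both items as direct consequences of Definition~\ref{def:partial-order}, which partitions the atomic arguments $a_{1},\ldots,a_{n}$ of a workflow set into three pairwise disjoint subsets (entry, ordinary, exit) and prescribes exactly which of them participate in the formation of $f_{en}$ and $f_{ex}$. Both items then reduce to routine bookkeeping about which atomic formulas may syntactically occur in which component of the workflow set.

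For the first item, I would argue directly from the syntactic formation constraints. Definition~\ref{def:partial-order} states that $f_{en}$ is formed from the entry arguments \emph{and no others}, and similarly that $f_{ex}$ is formed from the exit arguments \emph{and no others}. Since the three argument subsets are pairwise disjoint, any ordinary argument lies outside both the entry and the exit subsets, and therefore cannot occur in $f_{en}$ or in $f_{ex}$. A short induction on the structure of a point formula (Definition~\ref{def:CL-syntax}) makes precise the intuitive claim that an argument not used in constructing the formula does not appear in it; this is the only genuinely formal step, and it is purely mechanical.

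For the second item, I would exploit the non-emptiness clauses of Definition~\ref{def:partial-order}: the first subset (entry arguments) contains at least one element $a_{i}$ and the third subset (exit arguments) contains at least one element $a_{j}$. By pairwise disjointness, $a_{i}\neq a_{j}$. Interpreting atomic propositions as workflow activities, as stipulated in Section~\ref{sec:preliminaries}, the arguments $a_{i}$ and $a_{j}$ represent two distinct activities of the workflow, witnessed inside $f_{en}$ and $f_{ex}$ respectively. Hence every workflow contains at least two activities, as claimed.

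The only conceptual obstacle I foresee is the bridge between the syntactic notion of an atomic argument and the informal notion of a workflow activity, since the corollary talks about activities whereas Definition~\ref{def:workflow-set} talks only about atomic formulas. I would dispose of this once at the outset by appealing to the modelling convention that propositions in $AP$ denote workflow activities, so that distinct arguments of $wrf()$ correspond to distinct activities. Once this identification is in place, both items collapse to the elementary disjointness-plus-non-emptiness argument just sketched.
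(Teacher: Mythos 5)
Your proposal is correct and follows essentially the same route as the paper's own (very terse) proof: both items are read off from Definition~\ref{def:partial-order}, using the pairwise disjointness of the entry/ordinary/exit argument subsets for the first item and the non-emptiness of the entry and exit subsets (hence at least two distinct arguments, i.e.\ activities) for the second. Your version merely spells out the details the paper leaves implicit, so no further comment is needed.
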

\begin{proof}
The proof is relatively simple and for example the second statements follows from the fact that
the entire set of atomic formulas $a_{1}, \ldots, a_{n}$ as arguments for a workflow set must contain
at least two arguments which constitute activities (task).
\end{proof}

The whole software model comprising workflows can be quite complex including nesting workflows
and this is why there is a need to define a symbolic notation
which enables to represent any potentially complex structure.
\begin{definition}
\label{def:logical-expression}
The \emph{workflow expression} $W$ is a structure built using the following rules:
\begin{itemize}
\item every workflow set $wrf(a_{1}, ..., a_{i}, ..., a_{n})$,
      where every $a_{i}$ is an atomic formula,
      is a workflow expression,
\item every $wrf(A_{1}, ..., A_{i}, ..., A_{m})$, where every $A_{i}$ is either
      \begin{itemize}
      \item an atomic formula $a_{k}$, where $k>0$, or
      \item a worflow set $wrf(a_{j})$, where $j>0$ and every $a_{j}$ is an atomic formula, or 
      \item a workflow expression $wrf(A_{j})$, where $j>0$
      \end{itemize}
      is also a workflow expression.
\end{itemize}
\end{definition}
Every $a_{i}$ (small letters) represents only atomic formulas.
Every $A_{i}$ (capital letters) represents either atomic formulas or workflows.
These rules allow to define an arbitrary complex workflow expression.
Due to the partial order relation described above and Corollary~\ref{th:wrk-properties},
it should be noted that, in regards to the workflow expression,
there is a similar valid restriction on the number of arguments,
i.e.\ there are at least two arguments for every workflow expression,
what is informally shown through the way of indexing for a workflow expression which takes values $i,j=1,2, ...$

The notion of aggregated entry/exit formulas is introduced which is a result
of nested and complex workflows, as well as the need to transfer,
informally speaking,
the logical signal to all start/termination points of a nested workflow.
\begin{definition}
\label{def:aggregated-conditions}
Let $w^{c}$ for a workflow expression $w$ with the upper index $c=e$ (or $x$, respectively) be
the \emph{aggregated entry formula} (or the \emph{aggregated exit formula}, respectively)
when the aggregated formula is calculated using the following (recursive) rules:
\begin{enumerate}
\item if there is no workflow itself in the place of any atomic formula/argument which syntactically belongs to
      the $f_{en}$ formula (or the $f_{ex}$ formula, respectively) $w$,
      then $w^{e}$ is equal to $f_{en}$ ($w^{x}$ is equal to $f_{ex}$, respectively),
\item if there is a workflow, say $t()$, in a place of any atomic argument, say $r$, which syntactically belongs to
      the $f_{en}$ formula (or the $f_{ex}$ formula, respectively) of $w$,
      then $r$ is replaced by $t^{e}$ (or $t^{x}$, respectively) for every such case.
\end{enumerate}
\end{definition}
These rules allow to define aggregated point formulas for an arbitrary complex workflow expression.

\begin{example}{}
\label{example-predefined-workflow}
Let us supplement Definitions~\ref{def:logical-expression} and~\ref{def:aggregated-conditions} by some examples.
Let $\Sigma$ is a \emph{predefined workflow set}, e.g.\
\begin{eqnarray}
  \Sigma=\{ Seq, Concur, Branch, Loop \} \label{for:sample-prefined-workflow}
\end{eqnarray}
properties of which might be described and stored in the $P$ set,
c.f.\ Fig.~\ref{fig:predefined-P} and~\ref{fig:predefined-P-Arbitrary-Cycles},
modeling sequence, concurrency, branching and iteration, respectively.
However, they are defined here in a different (simpler) way comparing the $P$ set,
i.e.\ through direct introduction of all necessary formulas.
Thus,
$Seq(a,b)=\{a, b, a \imp\som b, \alw\neg (a \con b)\}$,
$Concur(a,b,c)=\{a, b \dis c, a \imp\som b \con \som c,\alw\neg(a \con(b \dis c))\}$,
$Branch(a,b,c)=\{a, b \dis c, a \imp (\som b \con \neg\som c) \dis (\neg\som b \con \som c), \alw\neg(b \con c)\}$, and
$Loop(a,b,c,d)=\{a,d, a \imp (\som b \con \neg\som d) \dis (\neg\som b \con \som d),
b \imp\som c,
c \imp (\som b \con \neg\som d) \dis (\neg\som b \con \som d) \}$.
The meaning of $Seq$ seems obvious.
The $Concur$ and $Branch$ workflows model concurrency and branching,
respectively,
for two activities $b$ and $c$, which are preceded by another activity $a$.
The $Loop$ workflow models a while-cycle case
that has exactly one input activity $a$ and exactly one output activity $d$,
which are located before and after, respectively, the main loop.
The sequence of two activities $b$ and $c$ constitutes the entire body of a loop,
where $b$ is a main instruction of the body,
and $c$ is an incrementation for the body.
Formal definitions in terms of temporal logic formulas for these patterns are proposed above.
If it is necessary to model concurrency and branching without a preceding activity,
then it can be obtained using provided patterns $Concur$ or $Branch$,
and assuming that the preceding activity $a$ may be, informally,
the \emph{null task}, that is the execution of which consumes zero time.
\end{example}

Workflow expressions may represent an arbitrary structure and an example of this is
$Seq(a,Seq(Concur(b,c,d),Branch(e,f,g)))$
meaning of which is intuitive,
i.e.\ it might shows the sequence that leads to
another sequence of a concurrent execution of some activities and then the branch by selecting an activity.
\begin{example}{}
\label{example-aggregated-formula}
Examples of aggregated formulas are given as follows.
For $w=Seq(a,b)$ formulas are $w^{e}=a$ and $w^{x}=b$ (step~1).
For $w=Concur(a,b,Seq(c,d))$ formulas are $w^{e}=a$ (step~1) and
$w^{x}=b \dis d$ (step~1 gives ``$b \dis$'' and step~2 gives $Seq^{x}=$``$d$'' which is aggregated to ``$b \dis d$'').
For $w=Concur(a,b,Concur(c,d,e))$ the formula
$w^{x}=b \dis (d \dis e)$ (step~1 gives ``$b \dis$'' and step~2 gives $Concur^{x}=$``$d \dis e$'' which is aggregated to ``$b \dis (d \dis e)$'').
For $w=Concur(a,Concur(b,c,d),Concur(e,f,g))$ the formula $w^{x}=(c \dis d) \dis (f \dis g)$
(step~2 gives $Concur^{x}=c \dis d$ and $f \dis g$,
and after aggregation ``$(c \dis d) \dis (f \dis g)$'' is obtained).
\end{example}

An important property of workflow expressions is their internal and nested structure.
Parentheses are the best illustration for it.
Suppose that all instances of ``$wrf($'' and ``$,wrf($'',
where ``$wrf$'' is a symbol of an arbitrary workflow,
were substituted by ``$($''.
Then, for example, the above workflow expression leads to the parenthesis structure $(a(b,c,d)(e,f,g))$.
This in turn leads to the following Theorem.
\begin{theorem}{}
\label{theorem:paranthesis}
For any workflow expression and for any two workflow patterns $wrf_{i}()$ and $wrf_{j}()$,
where $i \neq j$,
only one of the following three situations holds:
\begin{enumerate}
\item $wrf_{i}()$ and $wrf_{j}()$ are completely disjointed;
\item $wrf_{i}()$ is completely contained in $wrf_{j}()$;
\item $wrf_{j}()$ is completely contained in $wrf_{i}()$.
\end{enumerate}
\end{theorem}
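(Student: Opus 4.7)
My plan is to prove the theorem by structural induction on the recursive construction of workflow expressions given in Definition~\ref{def:logical-expression}, exploiting the parenthesis correspondence that the paper itself sketches just before the statement. The key observation is that Definition~\ref{def:logical-expression} builds every workflow expression in exactly one of two ways: either as a pattern applied to atomic formulas only (the base case), or as a pattern $wrf(A_1,\ldots,A_m)$ whose arguments $A_i$ are themselves atomic formulas or (sub)expressions. This makes the nested sub-patterns form a rooted tree, and the three-way dichotomy is nothing more than the standard ancestor/descendant/incomparable trichotomy on a tree.

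First I would fix terminology: given a workflow expression $W$, call a pattern $wrf_k()$ \emph{occurring} in $W$ either $W$ itself or a pattern occurring in one of the argument sub-expressions $A_i$, and say $wrf_i() \sqsubseteq wrf_j()$ when $wrf_i()$ occurs in $wrf_j()$. Under this definition, containment is exactly the descendant relation in the syntax tree of $W$. Two patterns are then \emph{completely disjoint} when neither $\sqsubseteq$ relation holds, and the goal becomes proving that for $i\neq j$ we cannot have a fourth, partial-overlap case.

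The induction then goes as follows. In the base case $W=wrf(a_1,\ldots,a_n)$ with only atomic arguments, $W$ is the unique pattern occurring in $W$, so the statement holds vacuously. For the inductive step, assume $W=wrf(A_1,\ldots,A_m)$ and that the theorem holds for each sub-expression $A_k$. Pick two distinct patterns $wrf_i()$ and $wrf_j()$ occurring in $W$. If one of them equals the outer $W$, then the other is either $W$ itself (impossible since $i\neq j$) or occurs inside some $A_k$, and hence is contained in $W$, giving case~2 or case~3. Otherwise both $wrf_i()$ and $wrf_j()$ occur strictly inside the arguments; if they lie in the same $A_k$, the inductive hypothesis applied to $A_k$ yields one of the three cases; if they lie in distinct arguments $A_k$ and $A_l$ with $k\neq l$, then no pattern occurring in $A_k$ can be the same as, or contain, any pattern occurring in $A_l$, because the syntax trees rooted at $A_k$ and $A_l$ are disjoint subtrees of the tree rooted at $W$, so they are completely disjoint (case~1).

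The main obstacle I expect is not the induction itself but justifying the last sentence cleanly, namely that distinct argument slots $A_k$ and $A_l$ of the same outer pattern yield syntactically disjoint occurrences. This relies on the implicit assumption that Definition~\ref{def:logical-expression} produces a \emph{unique} parse tree for every workflow expression, i.e.\ that the notation is unambiguous; once that is granted, the parenthesis bijection the paper alludes to makes the disjointness immediate, since the matching parenthesis pair for $A_k$ closes before the one for $A_l$ opens. I would state this uniqueness explicitly (or invoke it as a standard consequence of the comma-separated, parenthesized syntax) and then the trichotomy follows as described.
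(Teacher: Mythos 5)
Your proposal is correct and follows essentially the same route as the paper: a structural induction on the recursive Definition~\ref{def:logical-expression}, with the trivial base case of atomic arguments, complete containment arising from substituting a pattern into a single argument slot, and disjointness arising from substitution into distinct argument slots (the paper phrases this via correctly paired parentheses rather than an explicit syntax tree). Your version is simply a more carefully organized rendering of the same argument, with the parse-tree uniqueness assumption made explicit where the paper leaves it implicit.
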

\begin{proof}
Firstly, let us note that Definition~\ref{def:logical-expression} is recursive.
For the first case of the Definition,
a simple pattern with atomic formulas is considered and it is consistent with the Theorem in an obvious way.
For the second case, two subcases are considered.
For the first subcase,
if the argument is an atomic formula then it is clear that no parentheses are introduced.
For the second subcase,
recursive application of the rule introduces a new pattern with correctly paired parenthesis
and this subcase guarantees the complete contain (nesting) of patterns.
If correctly paired patterns are introduced/substituted in place of different arguments of a pattern
then it guarantees the disjointedness of the paired patterns.
\end{proof}

\section{Generating specifications}
\label{sec:generating-specifications}

The process of generating logical specifications is described below.
Informally speaking, the \emph{logical specification} is a counterpart of a formal generalization
understood as an act of taking some facts and making a broader statements,
i.e.\ the formal derivation of a general statement from a particular one.
In this work the logical specification is expressed as a set of temporal logic formulas.
These formulas are generated from workflow expressions using predefined workflows as logical primitives.
Let us define it formally and then present an algorithm.
\begin{definition}
\label{def:logical-specification}
The \emph{logical specification} $L$ is a set of temporal logic formulas derived from
a workflow expression~$W$ and predefined set~$P$
using the algorithm $\Pi$,
i.e.\ $L(W) = \{f_{i} : i>0 \con f_{i} \in \Pi(W,P)\}$,
where $f_{i}$ is a LTL formula.
\end{definition}

Generating logical specifications is not a simple summation of
predefined formula collections resulting from patterns used in a workflow expression.
The generation algorithm~$\Pi$ is given as Algorithm~\ref{alg:generating-specification}.
The generation process has two inputs.
The first one is a workflow expression $W$ which is a kind of variable,
i.e.\ it varies for every workflow model.
The second one is a workflow property set~$P$ which is a kind of constant as it is predefined and fixed
containing definitions of workflows in terms of temporal logic formulas.
The more detailed information about this set including its examples is in Section~\ref{sec:analysis-verification}.
The output of the generation algorithm is a logical specification understood as
a set of temporal logic formulas.
Let $wrf()^{T}$ represents a set of all temporal formulas extracted from a workflow set (i.e.\ without point formulas).
\begin{algorithm}[htb]
\caption{Generating logical specifications ($\Pi$)}
\label{alg:generating-specification}
{\normalsize
\begin{algorithmic}[1]
\algrenewcommand\algorithmicrequire{\textbf{Input:}}
\algrenewcommand\algorithmicensure{\textbf{Output:}}
\Require Logical expression $W_{L}$ (non-empty), predefined set $P$ (non-empty)
\Ensure Logical specification $L$
\State $L:=\emptyset$ \Comment{initiating specification}\label{alg:2:atomic-ini}
\For{every workflow $wrf()$ of $W_{L}$ from left to right}\label{alg:2:atomic-for}
\If{all arguments of $wrf()$ are atomic}\label{alg:2:atomic-s}
\State{$L := L \cup wrf()^{T}$}
\EndIf\label{alg:2:atomic-e}
\If{any argument of $wrf()$ is a workflow itself}\label{alg:2:non-atomic-s}
\State{for every such an argument, say $r()$, substitute}
\State{disjunction of its aggregated entry and exit}
\State{formulas in all places where the argument}
\State{occurs in the $wrf()$ temporal formulas, i.e.}
\State{$L := L \cup ((wrf()^{T}) \leftarrow \mbox{``$r()^{e} \dis r()^{x}$''})$}\label{alg:2:non-atomic-dis}
\EndIf\label{alg:2:non-atomic-e}
\EndFor
\end{algorithmic}
}
\end{algorithm}
The Algorithm refers to similar ideas in the works~\cite{Klimek-2013-sefm,Klimek-Faber-Kisiel-Dorohinicki-2013-fedcsis},
however, the case considered here is a more general and not focused on specific patterns.
All workflows of the workflow expression are processed one by one and the Algorithm always halts.
All parentheses are paired.
Let $p4(h,p2(d,p1(a,b,c),e),p3(f,g))$ is a hypothetical workflow expression,
where $p1$, $p2$, $p3$, and $p4$ are workflow patterns.
Pattern $p3$ has two arguments, and other patterns have three arguments.
Considering the loop in the line~\ref{alg:2:atomic-for} of Algorithm~\ref{alg:generating-specification},
the processing order of patterns is the following: $p4$, $p2$, $p1$, and $p3$,
where $p4$ and $p2$ are processed in lines \ref{alg:2:non-atomic-s}--\ref{alg:2:non-atomic-e},
and $p1$ and $p3$ are processed in lines \ref{alg:2:atomic-s}--\ref{alg:2:atomic-e}.

\begin{example}{}
\label{example-algorithm}
Considering the predefined workflow set given by Formula~\ref{for:sample-prefined-workflow},
and its definitions of workflows,
let us supplement Algorithm~\ref{alg:generating-specification} by some examples.
The example for lines \ref{alg:2:atomic-s}--\ref{alg:2:atomic-e}:
$Seq(a,b)$, gives $L=\{ a \imp\som b, \alw\neg (a \con b) \}$
and $Branch(a,b,c)$ gives $L=\{ a \imp (\som b \con \neg\som c) \dis (\neg\som b \con \som c), \alw\neg (b \con c) \}$.
The example for lines \ref{alg:2:non-atomic-s}--\ref{alg:2:non-atomic-e}:
$Concur(Seq(a,b),c,d)$ leads to
$L = \{ (a \dis b) \imp \som c \con \som d, \alw\neg((a \dis b) \con(c \dis d)) \} \cup \{a \imp\som b, \alw\neg (a \con b) \}$.
Other examples are shown in Section~\ref{sec:analysis-verification}.
\end{example}

The Algorithm comprises two main parts.
In the first part, lines \ref{alg:2:atomic-s}--\ref{alg:2:atomic-e},
logical specifications are rewritten from a predefined set,
c.f.\ Fig~\ref{fig:predefined-P} and~\ref{fig:predefined-P-Arbitrary-Cycles},
without any modification and summed with the resulting specification.
In the second part, lines \ref{alg:2:non-atomic-s}--\ref{alg:2:non-atomic-e},
the workflow $f_{en}$ and $f_{ex}$ formulas are taken into account
since they allow consideration of the nested workflow as a whole,
i.e.\ without analyzing its internal behavior
which is itself and separately taken into account in the first part.
Consideration of both $f_{en}$ and $f_{ex}$ seems a bit redundant for a single workflow
but on the other hand, informally speaking,
these two formulas have equal rights to represent a workflow,
and the line~\ref{alg:2:non-atomic-dis} contains their disjunction which is substituted,
and then modified temporal formulas are summed with the resulting specification.

The Algorithm allows to automate the process of generating logical specifications.
Logical expressions are translated into logical specifications which are expressed in
terms of temporal logic formulas.
Logical expressions can be arbitrarily complex and nested.
Moreover,
the list of predefined patterns can be arbitrarily,
that is in any way and at any time,
extended by new patterns.
The only requirement is to define behaviour,
c.f.\ Fig~\ref{fig:predefined-P} and~\ref{fig:predefined-P-Arbitrary-Cycles},
for new patterns in terms of temporal logic prior to their first use.
Thus,
the general idea that logical patterns are once defined and then widely used is satisfied.

The completeness problem is a fundamental issue for logical systems
and constitutes their key requirement in many fields.
Informally speaking, completeness is some opposition to the fragmentation.
In other words,
completeness means having all elements and lacking nothing while
fragmentation means not having all elements and lacking something.
Generally speaking, an object, or a set of objects, is complete if nothing more needs to be added to it.
In formal logic systems, \emph{completeness} means that if a formula is valid, it can be proven~\cite[p.~128]{Gries-Schneider-1993}.
In algorithms, it refers to the ability of finding a solution if one exists.

This paper discusses both predefined logical specifications and
the Algorithm for generating logical specifications using predefined specifications.
This requires an integrated perspective towards completeness by considering two aspects:
\begin{enumerate}
\item completeness of possessed logical specifications,
      that is contained in a predefined set of patterns,
      c.f.\ Fig.~\ref{fig:predefined-P} and~\ref{fig:predefined-P-Arbitrary-Cycles}, and
\item completeness of the generation algorithm,
      i.e.\ Algorithm~\ref{alg:generating-specification}.
\end{enumerate}
Firstly, the completeness of the predefined set~$P$ is considered.
The set consists of logical specifications that refer to particular patterns,
or, in other words, every pattern is defined in terms of temporal logic formulas.
These specifications should be examined, one by one, for compliance with the relevant logical properties.
However,
as it has already been said in Section~\ref{sec:motivation-contribution} (motivation),
logical patterns are predefined by a logician or a person with good skills in logic
for further use by an ordinary analyst or a developer.
This leads to the conclusion that the logician is responsible for proving the correctness and logical properties of predefined specifications,
and some decision procedures,
c.f.\ Corollary~\ref{th:decision-procedures},
might be helpful for this process.

Predefined logical specifications constitute an input for the generation Algorithm~\ref{alg:generating-specification}.
Thus, it is reasonable to question whether the algorithm preserves the completeness when generating the resulting logical specification,
i.e.\ obtained as an output of the Algorithm.
\begin{definition}
\label{def:relatively-completene}
The Algorithm of generating logical specification is \emph{relatively complete}
if it preserves completeness of the generated logical specification,
or, in other words,
if it does not introduce itself incompleteness to the output logical specifications
with respect to predefined input specifications.
\end{definition}
\begin{theorem}{}
Supposing that predefined workflow set is non-empty, and every pattern of the $P$ set is non-empty,
and every two patterns have disjointed sets of atomic formulas,
and the workflow expression $W$ is non-empty,
then the logical specification obtained for Algorithm~\ref{alg:generating-specification} is relatively complete.
\end{theorem}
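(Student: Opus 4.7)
The plan is to prove the theorem by structural induction on the workflow expression $W$, exploiting Theorem~\ref{theorem:paranthesis} (the parentheses theorem) to ensure that the nesting structure is unambiguous. First I would unpack the definition of relative completeness (Def.~\ref{def:relatively-completene}): the obligation is not to show any absolute logical property of $L(W)$, but rather to show that every specification formula delivered as input, via the predefined set~$P$, is correctly represented in the output, and that no additional ``lossy'' rewriting is performed. Thus the proof reduces to showing that the algorithm (i)~terminates on the given input, (ii)~visits every workflow occurrence exactly once, and (iii)~the rewriting performed in lines~\ref{alg:2:non-atomic-s}--\ref{alg:2:non-atomic-e} is faithful in the sense that all temporal formulas contributed by each pattern are retained modulo a substitution that is semantically justified by the aggregated entry/exit formulas from Def.~\ref{def:aggregated-conditions}.

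For the base case I would take $W$ to be a workflow set $wrf(a_{1},\ldots,a_{n})$ whose arguments are all atomic; here the branch at lines~\ref{alg:2:atomic-s}--\ref{alg:2:atomic-e} fires and $L := L \cup wrf()^{T}$ copies the predefined temporal formulas verbatim. Since $P$ is non-empty and the corresponding pattern is non-empty by hypothesis, completeness is transferred in a trivial, set-theoretic way. The disjointedness assumption on atomic formulas between patterns is used here only to guarantee that $wrf()^{T}$ is well-identified: no accidental capture occurs when the formulas are dumped into the shared repository $L$. The termination clause (iii) is immediate because the \textbf{for}-loop in line~\ref{alg:2:atomic-for} ranges over the finite list of workflow occurrences in $W_{L}$, which is finite by Definition~\ref{def:logical-expression}.

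For the inductive step I would consider a workflow expression $W = wrf(A_{1},\ldots,A_{m})$ in which at least one argument $A_{i}$ is itself a workflow $r()$. By Theorem~\ref{theorem:paranthesis} every such nested $r()$ is either disjoint from or completely contained in the surrounding pattern, so the algorithm may treat the outer pattern independently of the inner one without loss of information. The crucial rewriting is line~\ref{alg:2:non-atomic-dis}, where each such argument $r$ is replaced by ``$r()^{e} \dis r()^{x}$'' inside the temporal formulas of the outer $wrf()^{T}$. Here I would argue that $r^{e} \dis r^{x}$ is exactly the external, whole-workflow representative of $r()$ promised by Def.~\ref{def:aggregated-conditions}: it aggregates every logical entry and termination point of the nested block into a single point formula (by Corollary~\ref{th:wrk-properties} and the partial order in Def.~\ref{def:partial-order}), and hence substituting it in place of the atomic placeholder $r$ preserves precisely the property that the outer pattern's temporal assertions were making about $r$. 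The inductive hypothesis, applied to each nested $r()$, then yields that the full specification for $r()$ also lands in $L$ in its entirety; and because the atomic alphabets of distinct patterns are pairwise disjoint, the union operation $L := L \cup \cdots$ introduces no spurious identification of symbols between inner and outer specifications.

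The step I expect to be the main obstacle is the rewriting in line~\ref{alg:2:non-atomic-dis}: one must argue convincingly that replacing a single atomic argument by the disjunction of its aggregated entry and exit formulas is not merely syntactically legal, but semantically complete, i.e.\ does not drop any obligation expressed by the outer pattern. I would address this by observing that the outer pattern's temporal formulas quantify only the \emph{external} behaviour of $r$ (when it starts and when it ends, via $f_{en}$ and $f_{ex}$ in the sense of Def.~\ref{def:workflow-set}), while the \emph{internal} behaviour of $r$ is handled by the inductive hypothesis on the separate recursive call for $r$. Combining these two pieces and invoking the disjointedness of atomic alphabets to rule out interference, together with the parentheses Theorem to rule out ambiguous overlap, completes the relative-completeness argument.
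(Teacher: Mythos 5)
Your overall skeleton matches the paper's: the atomic-argument branch just copies predefined formulas and is harmless, the parenthesis Theorem~\ref{theorem:paranthesis} and the disjoint-alphabet hypothesis exclude crossing and symbol capture, and everything hinges on the substitution in line~\ref{alg:2:non-atomic-dis}. The genuine gap sits exactly at the point you yourself flag as the main obstacle: you resolve it by asserting that $r^{e} \dis r^{x}$ ``preserves precisely the property that the outer pattern's temporal assertions were making about $r$'', which is a restatement of the claim to be proven, not an argument. In the paper, completeness in this step means reachability of all formulas of the generated specification, so what has to be established is a handover property: after the substitution, the remaining formulas must still force the computation to pass from the exit point of one nested workflow to the entry point of the next, otherwise the inner formulas contributed for $r()$ (your inductive hypothesis) are present in $L$ but never activated. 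The paper does this concretely: it introduces the entry/exit coupling formulas $\alw(w().f_{en}\imp\som w().f_{ex})$, $\alw(\neg w().f_{en}\imp\neg\som w().f_{ex})$ and $\alw\neg(w().f_{en}\con w().f_{ex})$ (the last via Corollary~\ref{th:wrk-properties}), adds the substituted liveness formula $\alw((g_{e}\dis g_{x})\imp\som(h_{e}\dis h_{x}))$ for the representative case $Seq(g(),h())$, and then verifies by the semantic tableaux method that these premises entail $\alw(g_{x}\imp\som h_{e})$, i.e.\ that Formula~\ref{for:compelte} is a tautology. Your proposal never identifies the coupling formulas, the target handover formula, or any semantic check; in particular it gives no reason why the disjunction $r^{e}\dis r^{x}$ (rather than, say, $r^{x}$ or $r^{e}$ alone) is an adequate external stand-in, and that is precisely the non-obvious content of the paper's verification.

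A secondary difference: you cast the argument as structural induction over Definition~\ref{def:logical-expression}, whereas the paper argues directly over the two branches of Algorithm~\ref{alg:generating-specification} together with one representative nesting. The induction framing is legitimate and arguably cleaner, and your termination and single-visit observations are fine; but without an analogue of the tableaux-verified entailment for the substituted formulas, the inductive step does not close, so the proof as written falls short of what the paper actually establishes.
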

\begin{proof}
Let us note that due to the parenthesis Theorem~\ref{theorem:paranthesis},
patterns are nested entirely/completely,
i.e.\ it is not possible to obtain a partial nesting
that might provide an undesirable crossing of patterns.
Furthermore, every two patterns contain disjointed sets of atomic formulas.
Every system can be described in terms of safety and liveness properties/formulas~\cite{Alpern-Schneider-1985}.
If a predefined logical specification is complete,
then incompleteness can not be introduced while generating the output logical specification
when using liveness formulas.
The most general form for liveness is formula $P \imp\som Q$.
Let us consider two cases for the Algorithm.
\begin{enumerate}
  \item \textit{Case for lines \ref{alg:2:atomic-s}--\ref{alg:2:atomic-e}.}
        Specifications are only rewritten from a predefined set,
        c.f.\ Fig.~\ref{fig:predefined-P} and Fig.~\ref{fig:predefined-P-Arbitrary-Cycles},
        then if the input specification is complete, than the completeness property is preserved.
  \item \textit{Case for lines \ref{alg:2:non-atomic-s}--\ref{alg:2:non-atomic-e}.}
        The entry and exit formulas are considered. They are generalization for
        a nested pattern and allows to bypass/skip its internal behaviour.
        They enable considering both the beginning and the end of a workflow.
        Let us note that for any workflow pattern $w()$,
        due to Corollary~\ref{th:wrk-properties},
        there is always satisfied $\alw\neg(w().f_{en} \con w().f_{ex})$.
        On the other hand, there is also valid $\alw(w().f_{en} \imp\som w().f_{ex})$.
        However, due to the nature of entry and exit points,
        they are both either satisfied or not satisfied,
        mapping a kind of logical propagation,
        that leads to the third, and additional, formula $\alw(\neg w().f_{en} \imp \neg\som w().f_{ex})$.

        Completeness refers to the reachability of all formulas and properties of a logical specification.
        Let us consider the sequence of two workflows $Seq(g(),h())$
        for the predefined set expressed by Formula~\ref{for:sample-prefined-workflow} and further definitions.
        Let $g().f_{en} \equiv g_{e}$, $g().f_{ex} \equiv g_{x}$, $h().f_{en} \equiv h_{e}$, and $h().f_{ex} \equiv h_{x}$.
        Let us return to the three formulas introduced above, and after considering them in the context of workflows $g()$ and $h()$,
        they are gathered as premises.
        Now,
        due to the ordinary liveness formula $a \imp\som b$,
        e.g.\ definition of $Seq$ for Formula~\ref{for:sample-prefined-workflow},
        where $a$ refers to $g()$ and $b$ refers to $h()$ for the mentioned $Seq(g(),h())$,
        and the substitution in the~\ref{alg:2:non-atomic-dis}-th line of Algorithm~\ref{alg:generating-specification},
        the following formula that is added to the premises set is obtained:
        $\alw((g_{e} \dis g_{x}) \imp \som(h_{e} \dis h_{x})))$.
        Formula $\alw(g_{x} \imp \som h_{e})$ is a requirement that expresses the demand
        to pass from one exit point directly to the next entry point that allows to cover
        all properties/formulas from the beginning of the next workflow.
        Gathering all premises and the demand, the resulting formula is
        \begin{eqnarray}
           (\alw(g_{e} \imp \som g_{x}) \con \alw(\neg g_{e} \imp \neg\som g_{x}) \con\nonumber\\
           \alw\neg(g_{e} \con g_{x}) \con \alw(h_{e} \imp \som h_{x}) \con\nonumber\\
           \alw(\neg h_{e} \imp \neg\som h_{x}) \con \alw\neg(h_{e} \con h_{x}) \con \nonumber\\
           \alw((g_{e} \dis g_{x}) \imp \som(h_{e} \dis h_{x})))\nonumber\\
           \imp \alw(g_{x} \imp \som h_{e}) \label{for:compelte}
        \end{eqnarray}
        While analyzing the above Formula using the semantic tableaux method,
        the obtained truth tree, similar to the small tree from Fig.~\ref{fig:deduction-tree},
        contains many hundreds of nodes,
        and is closed which means that Formula~\ref{for:compelte} is always satisfied (tautology).
\end{enumerate}
Considering both cases is sufficient for the entire Algorithm.
\end{proof}

\section{Models analysis and verification}
\label{sec:analysis-verification}

The method of formal verification of business models is discussed in this Section.
The method follows from the approach provided in this work.
Firstly, business systems are modeled using predefined workflow patterns,
i.e.\ processes associated with logical patterns.
In other words workflow patterns are predefined in terms of temporal logic formulas,
and then logical specifications are automatically generated using the proposed Algorithm~\ref{alg:generating-specification}.
The introduced deduction-based verification system allows to perform verification of business models in the formal way.

Workflow patterns are crucial for the approach introduced in this work as they lead to
the automation of the logical specifications generation process.
Informally speaking, a pattern is a distinctive formation created and used as an archetype.
Creating and using patterns promotes software reuse which is always a kind of \emph{id{\'e}e fixe} in software engineering.
Riehle and Zullighoven in their work~\cite{Riehle-Zullighoven-1996} described \emph{patterns} as
``the abstraction from a concrete form which keeps recurring in specific non-arbitrary contexts''.
Patterns might constitute a kind of primitives which enable mapping of the workflow patterns to logical specifications.
\emph{Business Process Modeling Notation} (BPMN) is a standard and dominant graphical notation,
e.g.~\cite{BPMN-OMG-2011},
for the modeling of business processes.
The primary goal of BPMN is to provide a notation that is
understandable by all business users, from business analysts to technical developers,
and finally, to business people who will manage and monitor these processes.
An important part of BPMN are 21 patterns which are introduced
in the work~\cite{Aalst-etal-2003}.
Gradually building in complexity, process patterns were broken down into six categories,
and the Basic Control Flow Patterns category is considered in this work.
The proposed method of the automatic extraction of logical specifications is
based on the assumption that the whole business model is built using
only the well-known workflow patterns of BPMN.
This assumption is fundamental to the consideration of the work and
is not a restriction since it enables receiving correct
and well-composed business models.

\begin{figure}[htb]
{\footnotesize
\begin{minipage}{.4\linewidth}
\begin{verbatim}
                        /* version 25.10.2013
/* Basic Control Patterns
Sequence(f1,f2):
f1
f2
[](f1 => <>f2) / [](~f1 => ~<>f2)
[]~(f1 & f2)
ParallelSplit(f1,f2,f3):
f1
f2 | f3
[](f1 => <>f2 & <>f3) / [](~f1 => ~<>f2 & ~<>f3)
[]~(f1&(f2|f3))
Synchronization(f1,f2,f3):
f1 | f2
f3
[](f1 & f2 => <>f3) / []( ~(f1 & f2) => ~<>f3)
[]~((f1|f2)&f3)
ExclusiveChoice(f1,f2,f3):
f1
f2 | f3
[](f1 => (<>f2 & ~<>f3)|(~<>f2 & <>f3))
[](~f1 => ~<>f2 & ~<>f3)
[]~(f1&(f2|f3)) / []~(f2 & f3)
SimpleMerge(f1,f2,f3):
f1 | f2
f3
[](f1|f2 => <>f3) / [](~(f1|f2) => ~<>f3)
[]~(f1|f2) / []~((f1|f2)&f3)

/* ..... [other] Business Patterns
\end{verbatim}
\end{minipage}
}
\caption{A sample predefined set $P$}
\label{fig:predefined-P}
\end{figure}
Let the predefined workflow set of patterns be
$\Sigma=\{Sequence, ParallelSplit,$ $Synchronization, ExclusiveChoice, SimpleMerge\}$.
This set might be extended using other patterns described in the work~\cite{Aalst-etal-2003}.
Definitions of all potentially used workflow patterns are expressed in terms of temporal logic and
stored in the set $P$, which is predefined and fixed.
It is assumed that the defining process is performed by a person with good skills in logic.
The process should contains considerations and proofs of the logical properties for every pattern.
Furthermore, the defining process is performed once, and then logical primitives can be widely used.
The example of such a predefined workflow set $P$ is shown in Figure~\ref{fig:predefined-P}.
The way to define formally the individual workflow patterns,
the type of used formulas, is itself an interesting problem.
However, it is not discussed here exactly,
and should be the subject of research for all patterns in a separate work,
c.f.\ remarks in the last Section~\ref{sec:conclusions},
where the syntax of the presentation language shown in the Figure is expected to be defined formally.
Now, it is presented informally in the following way.
Most elements of the $P$~set,
i.e.\ two temporal logic operators, classical logic operators, are not in doubt in understanding.
The slash allows to place more than one formula in a single line.
$f_{1}$, $f_{2}$ etc.\ are atomic formulas
and constitute a kind of formal arguments for a pattern.
Every pattern has two point formulas which are located at the beginning of the set
describing the start and final, respectively,
logical conditions/circumstances of the execution of a pattern.
The content of the $P$~set is shown as a plain ASCII text to
illustrate its participation in the real processing,
c.f.\ Fig.~\ref{fig:deduction-system}.
Though, the above set contains a relatively small number of patterns,
justification for this is only to present a general idea for the pattern-oriented generation of logical specifications,
and there is no difficulty with defining a set of workflow formulas for
any other process patterns, as well as for the 21 patterns mentioned above~\cite{Aalst-etal-2003,White-2004}.

\begin{figure}[htb]
{\footnotesize
\begin{minipage}{.4\linewidth}
\begin{verbatim}

ArbitraryCycles(Alfa,Beta,Chi,A,B,C,D,F,E,G):
Alfa
E | G
/* first loop (Alfa)
[](x(Alfa) & c(Alfa) => <>B & ~<>A)
[](x(Alfa) & ~c(Alfa) => <>A & ~<>B)
[](~x(Alfa) => ~<>A & ~<>B)
[]~(x(Alfa)&(A|B)) / []~(A|B|C)
[](A => <>C) / [](~A => ~<>C)
[](B | C => <>D) / [](~(B | C) => ~<>D)
[]~((B|C)&D)
/* second loop (Beta)
[](D => <>x(Beta)) / [](~D => ~<>x(Beta))
[]~(D&x(Beta))
[](x(Beta) & c(Beta) => <>E & ~<>F)
[](x(Beta) & ~c(Beta) => ~<>E & <>F)
[](~x(Beta) => ~<>E & ~<>F)
[]~(x(Beta)&(E|F)) / []~(E|F)
/* towards outside (Chi,G)
[](F => <>x(Chi)) / [](~F => ~<>x(Chi))
[]~(F&x(Chi))
[](x(Chi) & c(Chi) => <>G & ~<>C)
[](x(Chi) & ~c(Chi) => ~<>G & <>C)
[]~(x(Chi)&(G|C)) / []~(G|C)
\end{verbatim}
\end{minipage}
}
\caption{The Arbitrary Cycles pattern for a predefined set $P$}
\label{fig:predefined-P-Arbitrary-Cycles}
\end{figure}
Although formal definitions for all patterns exceed the size and goal of this paper,
the predefined set of workflows~$P$ is to be extended by the ``ArbitraryCycles'' pattern,
c.f.~\cite[pages 11--12]{White-2004} and Fig.~\ref{fig:predefined-P-Arbitrary-Cycles},
which is perhaps the most complex process pattern, and
$\Sigma := \Sigma \cup \{ArbitraryCycles\}$.
The pattern represents cycles that have more than one entry or exit points.
There are no special restrictions on the used types of loops.
A new notation linked with tested loop conditions is introduced.
If $exp$ is a condition (logical expression) to be tested,
which is associated with a certain activity,
then $c(exp)$ means that the logical expression~$exp$ is evaluated and is true.
$x(exp)$ means that the activity associated with the expression $exp$ is satisfied,
i.e.\ the activity is executed (from the rising/positive edge to the falling/negative edge).
$exp$ can be evaluated only when $x(exp)$ is satisfied,
and the following sentence is valid: $x(exp) \con (c(exp) \dis \neg c(exp))$,
otherwise, when $\neg x(exp)$, the value of the $c(exp)$ expression is undefined.
The example of an extended part of the $P$ set is shown in Figure~\ref{fig:predefined-P-Arbitrary-Cycles}.
$Alfa$, $Beta$, $Chi$, $A$, $B$, $C$, etc.\ are formal arguments for the workflow pattern,
where the first three arguments refer with some conditions.
The workflow has one entry argument, two exit arguments, and six ordinary arguments.
Temporal formulas of the workflow set describe both safety and
liveness properties for the pattern.

\begin{example}{}
\label{example-main}
Let us consider a simple yet illustrative example to present the approach of the work.
The example is somewhat abstract but the main purpose is
to demonstrate the main idea which is the deployment of predefined patterns for modeling and generating logical specifications,
and formal verification of business models.
Suppose workflow expression~$W$ is
\begin{eqnarray}
    Sequence(ExclusiveChoice(Sequence(a,b),\nonumber\\
    Sequence(c,d),Sequence(ParallelSplit(e,f,g),\nonumber\\
    Synchronization(h,i,j))),SimpleMerge(k,l,m))\nonumber
\end{eqnarray}
The logical specification $L$ is built in the following steps.
At the beginning, the specification is $L=\emptyset$.
The patterns are processed in the following order:
$Sequence$,
$ExclusiveChoice$,
$Sequence$,
$Sequence$,
$Sequence$,
$ParallelSplit$,\\
$Synchronization$, and
$SimpleMerge$.

The following sub-sets are generated:
the first $Sequence$ gives
$L_{1}= \{ \alw(a \dis (d \dis j)) \imp \som ((k \dis l) \dis m),
       \alw(\neg(a \dis (d \dis j)) \imp \neg\som ((k \dis l) \dis m),
       \alw\neg((a \dis (d \dis j)) \con ((k \dis l) \dis m)) \}$,
$ExclusiveChoice$ gives
$L_{2}= \{ \alw((a \dis b) \imp (\som (c \dis d) \con \neg\som (e \dis j)) \dis (\neg\som (c \dis d) \con \som (e \dis j))),
       \alw(\neg (a \dis b) \imp \neg\som (c \dis d) \con \neg\som (e \dis j)),
       \alw\neg((a \dis b) \con ((c \dis d) \dis (e \dis j))),
       \alw\neg((c \dis d) \con (e \dis j)) \}$,
the second $Sequence$ gives
$L_{3}= \{ \alw(a \imp \som b),
       \alw(\neg a \imp \neg\som b),
       \alw\neg(a \con b) \}$,
the third $Sequence$ gives
$L_{4}= \{ \alw(c \imp \som d),
       \alw(\neg c \imp \neg\som d),
       \alw\neg(c \con d) \}$,
the fourth $Sequence$ gives
$L_{5}= \{ \alw((e \dis (f \dis g)) \imp \som ((h \dis i) \dis j)),
       \alw(\neg (e \dis (f \dis g)) \imp \neg\som ((h \dis i) \dis j)),
       \alw\neg((e \dis (f \dis g)) \con ((h \dis i) \dis j)) \}$,
$ParallelSplit$ gives
$L_{6}= \{ \alw(e \imp \som f \con \som g),
       \alw(\neg e \imp \neg\som f \con \neg\som g),
       \alw\neg(e \con (f \dis g)) \}$,
$Synchronization$ gives
$L_{7}= \{ \alw(h \con i \imp \som j),
       \alw( \neg(h \con i) \imp \neg\som j),
       \alw\neg((h \dis i) \con j) \}$, and
$SimpleMerge$ gives
$L_{8}= \{ \alw(k \dis l \imp \som m),
       \alw(\neg(k \dis l) \imp \neg\som m),
       \alw\neg(k \dis l),
       \alw\neg((k \dis l) \con m) \}$.
Thus, the resulting specification is
$L = L_{1} \cup ... \cup L_{8}$
and contains formulas
\begin{eqnarray}L=\{
\alw(a \dis (d \dis j)) \imp \som ((k \dis l) \dis m),\nonumber\\
       \alw(\neg(a \dis (d \dis j)) \imp \neg\som ((k \dis l) \dis m),\nonumber\\
       \alw\neg((a \dis (d \dis j)) \con ((k \dis l) \dis m)),\nonumber\\
       \alw((a \dis b) \imp (\som (c \dis d) \con \neg\som (e \dis j)) \dis\quad\nonumber\\ (\neg\som (c \dis d) \con \som (e \dis j))),\nonumber\\
       \alw(\neg (a \dis b) \imp \neg\som (c \dis d) \con \neg\som (e \dis j)),\nonumber\\
       \alw\neg((a \dis b) \con ((c \dis d) \dis (e \dis j))),\nonumber\\
       \alw\neg((c \dis d) \con (e \dis j)),
\alw(a \imp \som b),\nonumber\\
       \alw(\neg a \imp \neg\som b),
       \alw\neg(a \con b),\nonumber\\
\alw(c \imp \som d),
       \alw(\neg c \imp \neg\som d),
       \alw\neg(c \con d),\nonumber\\
\alw((e \dis (f \dis g)) \imp \som ((h \dis i) \dis j)),\nonumber\\
       \alw(\neg (e \dis (f \dis g)) \imp \neg\som ((h \dis i) \dis j)),\nonumber\\
       \alw\neg((e \dis (f \dis g)) \con ((h \dis i) \dis j)),\nonumber\\
\alw(e \imp \som f \con \som g),
       \alw(\neg e \imp \neg\som f \con \neg\som g),\nonumber\\
       \alw\neg(e \con (f \dis g)),
\alw(h \con i \imp \som j),\nonumber\\
       \alw( \neg(h \con i) \imp \neg\som j),
       \alw\neg((h \dis i) \con j),\nonumber\\
\alw(k \dis l \imp \som m),
       \alw(\neg(k \dis l) \imp \neg\som m),\nonumber\\
       \alw\neg(k \dis l),
       \alw\neg((k \dis l) \con m)
                  \} \label{formula-example-specification}
\end{eqnarray}
The resulting logical specification can be used for the formal verification of a system.
\end{example}

Liveness and safety are standard taxonomy of properties when specifying and verifying systems.
\emph{Liveness} means that the computational process achieves its goals,
i.e.\ something good eventually happens, or,
its counterexample has a prefix extended to infinity.
\emph{Safety} means that the computational process avoids undesirable situations,
i.e.\ something bad never happens, or,
its counterexample has a finite prefix.
The liveness property for the model can be
\begin{eqnarray}
\alw(b \imp\som j) \label{examined-property-liveness}
\end{eqnarray}
what means that always if $b$ is satisfied then sometime in the future the $j$ activity is satisfied.
The safety property for the examined model can be
\begin{eqnarray}
\alw\neg (c \con g) \label{examined-property-safety}
\end{eqnarray}
what means that
it never occurs that $c$ and $g$ are satisfied in the same time.

The whole formula to be analyzed using the semantic tableaux method for
the property expressed by Formula~\ref{examined-property-liveness} is
\begin{eqnarray}
(\alw(a \dis (d \dis j)) \imp \som ((k \dis l) \dis m) \con ... \con\nonumber\\
\alw\neg((k \dis l) \con m)) \imp
               \alw(b \imp\som j) \label{formula-example-property-liveness}
\end{eqnarray}
Formula~\ref{formula-example-specification} represents the output of
the \psframebox[boxsep=true,shadow=false]{\scriptsize G} component in Fig.~\ref{fig:deduction-system}.
Formula~\ref{formula-example-property-liveness} provides a combined input for
the \psframebox[boxsep=true,shadow=false]{\scriptsize T} component in Fig.~\ref{fig:deduction-system}.
When considering the property expressed by Formula~\ref{examined-property-safety},
then the whole formula is constructed in a similar way as
\begin{eqnarray}
(\alw(a \dis (d \dis j)) \imp \som ((k \dis l) \dis m) \con ... \con \nonumber\\
\alw\neg((k \dis l) \con m)) \imp
(\alw\neg (c \con g))\quad \label{formula-example-property-safety}
\end{eqnarray}
The full reasoning tree for both cases contains hundreds of nodes.
Formulas are valid and the examined properties are satisfied in
the considered model.

The prover is an important component of the architecture for the deduction-based system shown in Fig.~\ref{fig:deduction-system}.
It enables automate the inferencing process and formal verification of developed models.
Reasoning engines are more available,
especially in recent years when a number of provers for modal logics became accessible,
c.f.~\cite{Schmidt-2013-provers}.
Selection of an appropriate existing prover,
or building one's own,
constitutes a separate task that exceeds the size and main objectives of this work,
c.f.\ also the concluding remarks in the last Section.

\section{Conclusions}
\label{sec:conclusions}

The method of a pattern-oriented automatic generation of logical specifications for
business models expressed in BPMN is proposed.
Logical specifications are considered as a set of temporal logic formulas and obtaining it
is a crucial aspect in the case of the practical use of the deduction-based formal verification.
The algorithm as a method for an automatic generation of logical specifications from predefined logical patterns/primitives is proposed.
The architecture of a deduction-based system for formal verification of business models is presented.

The method of generating enables a kind of scaling up,
migration from small problems to real-world problems
in this sense that they are having more and more nesting patterns.
This gives hope for practical use in the case of problem of any size.
The proposed approach introduces the concept of logical primitives,
workflow patterns predefined in terms of temporal logic formulas.
They might be once well-defined and could be widely used by an inexperienced user.
The proposed system enables formal verification of business models
using temporal logic and semantic tableaux provers.
The advantage of the method is to provide an innovative concept for
the process verification which might be done for any given business model created
using the BPMN notation.
The aim of this work has been to provide a conceptual theoretical framework
to prepare workable solutions for the deduction-based formal verification of workflow-oriented models.

Future research should extend the results in some directions,
e.g.\ other logical properties of the approach should be explored.
The fundamental issue for the approach is to define formally all workflow patterns~\cite{Aalst-etal-2003}
in terms of temporal logic formulas to provide temporal logic-based semantics for workflows.
The literature review argues that there is a lack of such comprehensive and formal definitions.
Definitions proposed in Section~\ref{sec:analysis-verification},
i.e.\ the predefined set $P$,
might be considered as the beginning of such work.
Another important issue is a detailed analysis of the existing and available provers~\cite{Schmidt-2013-provers}
which could be useful for the approach and applied as a prover component (Fig.~\ref{fig:deduction-system}).
Future works may also include both the implementation of the generation component (Fig.~\ref{fig:deduction-system})
and its own temporal logic prover (Section~\ref{sec:preliminaries}) using the semantic tableaux method.
Implementation works in both of these cases are carried out and relatively advanced.
It should result in a CASE software providing industrial-proof tools,
that is implementing another part of formal methods, hope promising, in industrial practice.


\bibliography{rk-bib-rk,rk-bib-main,rk-bib-ws}


\end{document}